\title{Coloring Hardness on Low Twin-Width Graphs}
\titlerunning{Coloring Hardness on Low Twin-Width Graphs}
\author{\'{E}douard Bonnet}{Univ Lyon, CNRS, ENS de Lyon, Université Claude Bernard Lyon 1, LIP UMR5668, France \and \url{http://perso.ens-lyon.fr/edouard.bonnet/}}{edouard.bonnet@ens-lyon.fr}{https://orcid.org/0000-0002-1653-5822}{}
\authorrunning{\'E. Bonnet}
\crefname{question}{Question}{Questions}
\newtheorem*{rep@theorem}{\rep@title}
\newcommand{\newreptheorem}[2]{%
\newenvironment{rep#1}[1]{%
 \def\rep@title{#2 \ref{##1}}%
 \begin{rep@theorem}}%
 {\end{rep@theorem}}}
\newcommand{\mis}{\textsc{Max Independent Set}\xspace}
\newcommand{\vc}{\textsc{Vertex Cover}\xspace}
\newcommand{\cli}{\textsc{Max Clique}\xspace}
\newcommand{\fvs}{\textsc{Feedback Vertex Set}\xspace}
\newcommand{\tcol}{\textsc{3-Coloring}\xspace}
\newcommand{\kcol}{\textsc{$k$-Coloring}\xspace}
\newcommand{\hamp}{\textsc{Hamiltonian Path}\xspace}
\newcommand{\hamc}{\textsc{Hamiltonian Cycle}\xspace}
\newcommand{\mds}{\textsc{Dominating Set}\xspace}
\newcommand{\mim}{\textsc{Max Induced Matching}\xspace}
\newcommand{\chr}{\textsc{Min Coloring}\xspace}
\newcommand{\bandw}{\textsc{Bandwidth}\xspace}
\renewcommand{\leq}{\leqslant}
\newtheorem{question}{Question}
\begin{document}

\maketitle

\begin{abstract} 
  As the class $\mathcal T_4$ of graphs of twin-width at~most~4 contains every finite subgraph of the infinite grid and every graph obtained by subdividing each edge of an $n$-vertex graph at~least~$2 \log n$ times, most \NP-hard graph problems, like \mis, \mds, \hamc, remain so on $\mathcal T_4$.
  However, \chr and \kcol are easy on both families because they are 2-colorable and 3-colorable, respectively.

  We show that \chr is \NP-hard on the class $\mathcal T_3$ of graphs of twin-width at~most~3.
  This is the first hardness result on~$\mathcal T_3$ for a~problem that is easy on cographs (twin-width~0), on trees~(whose twin-width is at~most~2), and on unit circular-arc graphs (whose twin-width is at~most~3).
  We also show that for every $k \geqslant 3$, \kcol is \NP-hard on~$\mathcal T_4$.
  We finally make two observations: (1) there are currently very few problems known to be in \P\ on~$\mathcal T_d$ (graphs of twin-width at~most~$d$) and \NP-hard on~$\mathcal T_{d+1}$ for some nonnegative integer~$d$, and (2) unlike $\mathcal T_4$, which contains every graph as an induced minor, the class $\mathcal T_3$ excludes a~fixed \emph{planar} graph as an induced minor; thus it may be viewed as a~special case (or potential counterexample) for conjectures about classes excluding a~(planar) induced minor.
  These observations are accompanied by several open questions.
\end{abstract}

\section{Introduction}\label{sec:intro}

The graph parameter \emph{twin-width} was introduced in~2020~\cite{twin-width1}.
The family of graph classes of bounded twin-width is broad and diverse: it includes classes of bounded clique-width, $d$-dimensional grids, classes excluding a~fixed minor~\cite{twin-width1}, some cubic expanders~\cite{twin-width2}.
Moreover, there are algorithms that work on any class of effectively\footnote{A~class $\mathcal C$ has \emph{effectively bounded twin-width} if there is an integer~$d$ and a~polynomial-time algorithm that outputs a~$d$-sequence (see~\cref{sec:prelim} for all relevant definitions) of any input graph from~$\mathcal C$.} bounded twin-width: a~fixed-parameter tractable first-order model checking algorithm~\cite{twin-width1}, single-exponential parameterized algorithms~\cite{twin-width3}, improved approximation algorithms~\cite{twin-width3,BergeBDW23}, fast shortest-path algorithms~\cite{twin-width3,Bannach24}, etc.
Most \NP-hard graph problems remain intractable on graphs of twin-width at~most~4.
This paper tackles the hardness of coloring graphs of low twin-width.

For any nonnegative integer~$d$, we denote by $\mathcal T_d$ the class of graphs of twin-width at~most~$d$.
Finite subgraphs of the infinite planar grid~\cite{twin-width1} and $(\geqslant 2 \log n)$-subdivisions of $n$-vertex graphs~\cite{BergeBonnetDepres} are in~$\mathcal T_4$.
Problems such as \mis, \vc, \cli, \fvs, \mds, and \mim are \NP-hard on the latter family, while \hamp and \hamc are \NP-hard on the former.
Therefore all these problems are \NP-hard on~$\mathcal T_4$.
Notably, \chr is easy on subgraphs of the grid, since they are bipartite.
It is also easy on \emph{strict} subdivisions (i.e., when every edge is subdivided at~least~once), since such graphs are always 3-colorable and bipartiteness can be checked in linear time.
We show that \chr is already \NP-hard on~$\mathcal T_3$.
This is the first problem that is shown \NP-hard on~$\mathcal T_3$ while being in~\P\ on classes known to have twin-width at~most~3 (cographs, trees, unit circular-arc graphs).

\begin{theorem}\label{thm:col-tww-3}
  \chr is \NP-hard and, unless the ETH fails, requires $2^{\Omega(\sqrt{n})}$ time on $n$-vertex graphs of twin-width at~most~3 (even if a~3-sequence is provided). 
\end{theorem}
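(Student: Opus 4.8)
The plan is to reduce from an ETH-hard problem with no $2^{o(N)}$-time algorithm --- \tcol on general $N$-vertex graphs --- through a reduction whose output has $n = O(N^2)$ vertices; a $2^{o(\sqrt n)}$-time algorithm on twin-width-$\le 3$ graphs would then give a $2^{o(N)}$-time algorithm for \tcol, contradicting the ETH. From a graph $H$ I would build a graph $G = G(H)$, \emph{together with} an explicit $3$-sequence certifying $\tww(G) \le 3$, such that $\chi(G) \le c$ for a fixed constant $c$ if and only if $H$ is $3$-colorable; since the $3$-sequence is part of the output, the lower bound survives even when a $3$-sequence is supplied with the input. The quadratic blow-up is to be expected: a red-degree budget of $3$ is too small to support a genuinely two-dimensional skeleton, so $G$ must be organized around an essentially one-dimensional backbone, and routing the wires of $H$ along such a backbone so that wires of adjacent vertices are brought together (exactly as in a planarization) pushes the size to $\Theta(N^2)$.

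The combinatorial core is a small library of gadgets, each a graph of twin-width $\le 3$ equipped with a clean contraction sequence and a bounded-size interface, arranged so that chaining them along the backbone never pushes the red degree above $3$. I would use: (i) \emph{wire} gadgets --- long path-like subgraphs which, in any $c$-coloring below the threshold, carry one of exactly $3$ states and propagate it unchanged; (ii) \emph{fan-out} gadgets, so that a wire's state can be read at several sites; (iii) a \emph{crossing} gadget letting two wires cross without interacting, needed because $H$ need not be planar yet must be laid out along the backbone; (iv) an \emph{inequality} gadget which, attached to the two wires feeding an edge $uv$ of $H$, is $c$-colorable exactly when the two states differ. Each vertex of $H$ becomes a hub at which all incident wires are tied to one common state through fan-out gadgets, so that a proper $c$-coloring of $G$ restricts to a proper $3$-coloring of $H$ and, conversely, every $3$-coloring of $H$ extends to one of $G$. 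Building the gadgets from unit-interval-type pieces --- which the paper already observes have twin-width $\le 3$ --- keeps each gadget cheap; the real design constraint is the interfaces, which must be narrow enough that a sweep of the backbone passes through a gadget within the red-degree budget.

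The step I expect to be the genuine obstacle is producing \emph{one} global red-degree-$\le 3$ contraction sequence for the assembled graph. With a budget of $3$ a single vertex carrying a fourth pending relation is fatal, so the gadgets, their boundaries, the routing, and the order in which gadgets are eliminated have to be co-designed rather than analyzed separately: one must check that while a gadget is being contracted its at most three external contacts stay clean until they too are consumed, and that sweeping the backbone never accumulates a fourth red edge at a seam. This is where essentially all the work goes. Everything else is routine once the gadgets and the sweep are fixed --- the two directions of the color-propagation correspondence, and the bookkeeping giving $n = O(N^2)$ and hence the claimed $2^{\Omega(\sqrt n)}$ conditional lower bound.
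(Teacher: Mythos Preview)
Your proposal attempts something strictly stronger than the theorem and, in doing so, runs into an open problem. You aim for $\chi(G)\le c$ with a \emph{fixed constant} $c$; that would show $c$-\textsc{Coloring} \NP-hard on~$\mathcal T_3$, which the paper explicitly leaves open (see Question~\ref{q:3-3} for $c=3$; the paper only manages fixed~$k$ at twin-width~4, in Theorem~\ref{thm:tcol-hardness}). The entire weight of your plan rests on building crossing and inequality gadgets inside~$\mathcal T_3$ that operate with a bounded palette, and you correctly flag this as ``the genuine obstacle'' --- but you offer no construction, and there is no reason to believe one exists. As written, this is a plan of attack on an open question, not a proof of the stated theorem.

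The paper's actual proof sidesteps all of this by letting the number of colors grow with the instance. It reduces from \textsc{3-Sat} on $n$ variables and asks whether $\chi(G)\le 2n$. The backbone is two copies of the $(2n-1)$-st power of a long path (a unit-interval-type graph, as you anticipated), which forces any $2n$-coloring to be periodic with period~$2n$; each variable~$x_j$ owns a dedicated color pair $\{2j-1,2j\}$, and the truth value of~$x_j$ is encoded by which of the two orderings those colors take on positions $2j-1,2j$ along one of the tracks. A~single extra vertex per block, with carefully chosen non-neighbors, enforces one variable constraint or one clause check by needing a missing color. Because each variable has its own private colors, wires for different variables never interact, and \emph{no crossing gadget is needed at all}. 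The 3-sequence is then a straightforward left-to-right sweep of the two tracks. The unbounded palette is not a weakness of the construction but the mechanism that makes it work.
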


In the previous theorem, the ETH (for Exponential-Time Hypothesis)~\cite{Impagliazzo01} asserts that there exists a~constant $\lambda>1$ such that $n$-variable \textsc{3-Sat} cannot be solved in time $O(\lambda^n)$.
The reduction requires the number of colors to grow with~$n$.

We also show that \tcol is \NP-hard on~$\mathcal T_4$.
Note that planar graphs have twin-width at~least~7~\cite{Kral25} (and at~most~8~\cite{Hlineny25}).
So we cannot just invoke the hardness of \tcol in planar graphs.
We tune long subdivisions to turn them into hard \tcol instances.

\begin{theorem}\label{thm:tcol-hardness}
  \tcol is \NP-hard on graphs of twin-width at~most~4.  
\end{theorem}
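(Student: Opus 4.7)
The plan is to reduce from \tcol on 4-regular graphs, which is \NP-hard, and construct a graph whose membership in~$\mathcal{T}_4$ is witnessed by an explicit 4-sequence. Given a 4-regular input graph $G$ on~$n$ vertices, I would build $G'$ by taking a long subdivision of $G$ (subdividing each edge $\Theta(\log n)$ times) and inserting a small constant-size \emph{forcing gadget} at a fixed position of each subdivided edge. Long subdivisions of bounded-degree graphs lie in $\mathcal{T}_4$~\cite{BergeBonnetDepres}, so the base construction already has the right twin-width; the gadgets are what restores the 3-coloring constraint that a bare subdivision does not preserve.

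Each forcing gadget $H$ has two distinguished ports $s, t$ and must satisfy (i) every proper 3-coloring of $H$ gives distinct colours to~$s$ and~$t$, and (ii) every assignment of distinct colours to~$s, t$ extends to a proper 3-coloring of~$H$. A constant-size gadget built from a few triangles and $K_4 - e$ substructures, chained so that the colour is propagated and flipped between the two ends, achieves this; both properties are verified by a short case analysis on the 3-colorings of the building blocks. Correctness of the reduction then follows edge by edge: a 3-coloring of $G'$ restricts to a 3-coloring of $G$ by reading the colours at the original vertices, and conversely any 3-coloring of $G$ extends to $G'$ by colouring each subdivided path together with its gadget in accordance with the chosen endpoint colours.

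The main obstacle is certifying that the insertion of the gadgets keeps the twin-width at most~4. I would start from the explicit 4-sequence underlying the twin-width bound for long subdivisions and adapt it locally: whenever the sequence first touches a subdivided edge, the associated constant-size gadget is absorbed in a constant number of extra contractions. Because each gadget has bounded size, bounded degree, and interacts with the rest of the graph only through its two ports and the adjacent path vertices, these extra contractions can be arranged so that the red-degree around the gadget never exceeds~4; if necessary, taking a slightly longer subdivision creates enough slack in the base sequence to absorb the gadgets without overshooting the budget. The step-by-step verification that the red-degree stays below~5 throughout is tedious but mechanical, and is the only real technical crux of the proof.
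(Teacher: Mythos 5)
There is a genuine gap in the correctness of the reduction, and it is fatal as stated. You place a constant-size forcing gadget $H$ with ports $s,t$ somewhere in the middle of the subdivided edge, so the full replacement of an edge $uv$ has the form $u - p_1 - \cdots - p_k - s - (\text{inside of }H) - t - q_1 - \cdots - q_\ell - v$ with $k,\ell \geq 1$. Even granting that $H$ correctly forces $c(s)\neq c(t)$, the bare path segments $u - p_1 - \cdots - p_k - s$ and $t - q_1 - \cdots - q_\ell - v$ impose \emph{no} relation between the colors of their endpoints: a path with at least one internal vertex can be properly $3$-colored with its two ends colored in any prescribed way (equal or not). Hence $c(u)$ and $c(v)$ remain unconstrained, and the restriction of a $3$-coloring of $G'$ to $V(G)$ need not be a proper coloring of $G$. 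Your forward direction (``a 3-coloring of $G'$ restricts to a 3-coloring of $G$'') therefore fails. To repair this you would have to propagate the color constraint along the whole length of the replaced edge, e.g.\ by making the entire replacement a chain of equality/inequality gadgets. But then the construction is no longer a subdivision, so the twin-width-$4$ guarantee for $(\geq 2\log n)$-subdivisions that you are leaning on no longer applies, and the ``absorb the gadget in a constant number of extra contractions'' plan has to be rebuilt from scratch for a structurally different graph. That second issue was already optimistic even for the single-gadget version: the explicit $4$-sequence for long subdivisions is tightly coupled to the fact that all internal vertices have degree $2$, and it is not established that a local patch around a degree-$\geq 3$ gadget keeps the red degree within budget.

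For comparison, the paper avoids this obstacle entirely by not trying to propagate a $3$-coloring constraint along a bare path. It reduces from \textsc{Not-All-Equal 3-Sat}, attaches a single vertex $z$ adjacent to every internal ``variable-path'' vertex (forcing those paths to be $2$-colored once $c(z)$ is fixed), uses selective single subdivisions to control parity so that a $2$-coloring of the path reliably encodes a Boolean value at each clause position, and hangs a triangle per clause off the corresponding path positions. The parity trick is exactly what replaces your missing propagation mechanism, and the accompanying explicit $4$-sequence is designed around this specific graph rather than inherited from a subdivision bound.
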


This again holds even if a~\mbox{4-sequence} is provided and directly implies that \kcol is also hard on $\mathcal T_4$, for any fixed $k \geqslant 3$.

On the algorithmic side, most graph problems are in \P\ on~$\mathcal T_1$ since this class has bounded clique-width~\cite{twin-width6} and is a~subclass of permutation graphs~\cite{AhnTwinWidthOne}.
Although 1-sequences can be computed in polynomial time in~$\mathcal T_1$~\cite{tww-polyker}, and even in linear time~\cite{AhnTwinWidthOne}, the corresponding algorithms do not require a~sequence to be provided as part of the input. 
\mis (and consequently, \vc and \cli) can be solved in polynomial time in~$\mathcal T_2$ if 2-sequences are provided as part of the input~\cite{twin-width3}.

While $\mathcal T_1$ can be recognized in polynomial time, it is \NP-hard to decide if a~graph (of~$\mathcal T_5$) is in $\mathcal T_4$.
The complexity of the recognition of~$\mathcal T_2$ and of~$\mathcal T_3$ is open.

\begin{table}[!ht]
  \centering
  \begin{tabular}{lcccc}
    \toprule
    & $\mathcal T_1$ & $\mathcal T_2$ & $\mathcal T_3$ & $\mathcal T_4$ \\
    \midrule
    \chr & \P~\cite{AhnTwinWidthOne,EvenPL72} & ? & \NP-c~(\cref{thm:col-tww-3}) & \\
    \tcol & \P~\cite{twin-width6,CMR00} & ? & ? & \NP-c~(\cref{thm:tcol-hardness}) \\
    \kcol, $k \geqslant 3$ & \P~\cite{twin-width6,CMR00} & ? & ? & \NP-c~(\cref{cor:tcol-hardness}) \\
    \midrule
    \mis & \P~\cite{twin-width6,CMR00} & \textcolor{gray}{\P}~\cite{twin-width3} & ? & \NP-c~\cite{BergeBonnetDepres} \\
    \vc & \P~\cite{twin-width6,CMR00} & \textcolor{gray}{\P}~\cite{twin-width3} & ? & \NP-c~\cite{BergeBonnetDepres} \\
    \cli & \P~\cite{twin-width6,CMR00} & \textcolor{gray}{\P}~\cite{twin-width3} & ? & \NP-c~\cite{BergeBonnetDepres} \\
    \fvs & \P~\cite{twin-width6,CMR00} & ? & ? & \NP-c~\cite{BergeBonnetDepres} \\
    \mds & \P~\cite{twin-width6,CMR00} & ? & ? & \NP-c~\cite{BergeBonnetDepres} \\
    \mim & \P~\cite{twin-width6,CMR00} & ? & ? & \NP-c~\cite{BergeBonnetDepres} \\
    \hamp & \P~\cite{AhnTwinWidthOne,Damaschke91} & ? & ? & \NP-c~\cite{twin-width1,IPS82} \\
    \hamc & \P~\cite{AhnTwinWidthOne,DeogunS94} & ? & ? & \NP-c~\cite{twin-width1,IPS82} \\
    \textsc{Recognition} & \P~\cite{tww-polyker,AhnTwinWidthOne} & ? & ? & \NP-c~\cite{BergeBonnetDepres} \\
    \bottomrule
  \end{tabular}
   \caption{Complexity of some of the main \NP-complete graph problems in $\mathcal T_d$ for $d=1, 2, 3, 4$.
    The \textcolor{gray}{\P} in gray in the $\mathcal T_2$ column means that 2-sequences are required by the known algorithm.}
  \label{tab:summary}
\end{table}

\Cref{tab:summary} suggests the task of closing these gaps.

\begin{question}\label{q:gaps}
 Establish, for a~problem $\Pi$ of~\cref{tab:summary}, a~nonnegative integer~$d$ such that $\Pi$~on~$\mathcal T_d$ is in \P\ and $\Pi$~on~$\mathcal T_{d+1}$ is \NP-hard. 
\end{question}

To our knowledge, the only problems for which \cref{q:gaps} is settled are \textsc{Firefighter} and \textsc{Restricted Vertex Multicut}, which are in \P\ on~$\mathcal T_1$ (as they are polynomial-time solvable on permutation graphs~\cite{FominHL16,Papadopoulos12}) and \NP-hard on~$\mathcal T_2$ (as they are hard on trees~\cite{FinbowKMR07,Calinescu03}).

As the current paper is on graph coloring, we ask the following question.

\begin{question}\label{q:chr}
  Is \chr on~$\mathcal T_2$ in \P?
\end{question}

As we mentioned, \chr is the first problem to be shown \NP-hard on~$\mathcal T_3$ while being tractable on classes known to have twin-width at~most~3: cographs (which coincide with~$\mathcal T_0$~\cite{twin-width1}), trees (which are in~$\mathcal T_2$~\cite{twin-width1}), and unit circular-arc graphs (which are in~$\mathcal T_3$~\cite[below Theorem 5.5]{Bonnet24hdr}).
Without this requirement, other examples would include \textsc{Achromatic Number}, which is \NP-hard on cographs \cite{Bodlaender89}, \bandw, which is \NP-hard on trees~\cite{Garey78}, and \textsc{Partial Representation Extension}, which is \NP-hard on unit circular-arc graphs~\cite{Fiala22}.
\Cref{thm:col-tww-3} thus involved a~novel encoding that incurred a~quadratic blow-up. 
We wonder whether this blow-up can be avoided. 

\begin{question}\label{q:subexp}
  Can \chr be solved in time $2^{O(\sqrt{n})}$ on $n$-vertex graphs of~$\mathcal T_3$?
\end{question}

The class~$\mathcal T_1$ is now well understood~\cite{AhnTwinWidthOne}.
In particular, it is a~subclass of permutation graphs with bounded clique-width.
We believe that the classes $\mathcal T_2$ and $\mathcal T_3$ have some still hidden structure.
While the membership problem in~$\mathcal T_4$ is~\NP-hard~\cite{BergeBonnetDepres}, that of $\mathcal T_2$ and of~$\mathcal T_3$ are open.
It was proven that \emph{weakly sparse} subclasses (i.e., excluding a~biclique $K_{t,t}$ as a~subgraph) of $\mathcal T_2$ have bounded treewidth, whereas this is not the case for~$\mathcal T_3$~\cite{BergougnouxG23}.

However, bounded-degree subclasses of~$\mathcal T_3$ have bounded treewidth. 
This is because bounded-degree graphs of large treewidth admit subdivisions of large walls or their line graphs as induced subgraphs~\cite{Korhonen23} and those graphs have twin-width~(exactly)~4~\cite{Ahn25}.
Furthermore, the latter result combined with \cite{Abrishami24,Bonnet25} (extending~\cite{Korhonen23}) implies that every subclass of~$\mathcal T_3$ without large subdivided cliques as subgraphs has bounded treewidth.
As~\tcol is in \P\ on any class of bounded treewidth, if \tcol is \NP-hard on~$\mathcal T_3$, then the hard instances must contain arbitrarily large clique subdivisions as subgraphs. 
The hard \tcol instances that the proof of~\cref{thm:tcol-hardness} builds do not: they have a~single vertex of degree more than 4.

\begin{question}\label{q:3-3}
  Is \tcol on~$\mathcal T_3$ in \P?
\end{question}

The same question holds for~\mis.

\begin{question}\label{q:mis-3}
  Is \mis on~$\mathcal T_3$ in \P?
\end{question}

Another consequence of the above result of~\cite{Ahn25} is that $\mathcal T_3$ excludes a~fixed planar graph as an induced minor.
Thus \cref{q:mis-3,q:3-3} are special cases of the same questions on any class excluding a~planar induced minor, which was previously raised for \mis~\cite{Dallard24}.
The class $\mathcal T_3$ is a~good candidate for a~negative answer to the latter question (on the pessimistic side), or \cref{q:mis-3} could serve as a~preliminary step in positively answering it (on the optimistic side).
There are other questions and conjectures on classes excluding a~planar induced minor.
They can be revisited on the particular class~$\mathcal T_3$, like for instance the following conjecture appearing in~\cite{GartlandThesis}.

\begin{question}\label{q:bal-sep}
 Is there a~universal constant~$c$ such that every graph of~$\mathcal T_3$ admits a~balanced separator included in the neighborhood of at~most~$c$ vertices?
\end{question}

We recall the relevant definitions and notation in~\cref{sec:prelim}, show~\cref{thm:col-tww-3} in \cref{sec:col-tww-3}, and \cref{thm:tcol-hardness} in \cref{sec:tcol-hardness}.

\section{Preliminaries}\label{sec:prelim}

For two integers $i$ and $j$, we denote by $[i,j]$ the set of integers that are at least $i$ and at most~$j$.
For every integer $i$, $[i]$ is a shorthand for $[1,i]$.

\subsection{Standard graph-theoretic definitions and notation}

We denote by $V(G)$ and $E(G)$ the set of vertices and edges of a graph $G$, respectively.
For $S \subseteq V(G)$, the \emph{subgraph of $G$ induced by $S$}, denoted $G[S]$, is obtained by removing from $G$ all the vertices that are not in $S$.
Then $G-S$ is a shorthand for $G[V(G)\setminus S]$.
We denote by $N_G(v)$ the set of neighbors of~$v$ in~$G$.
A~\emph{subdivision} of a~graph $G$ is any graph $H$ obtained from $G$ by replacing edges $e$ of~$G$ by paths with at~least~one edge whose extremities are the endpoints of~$e$.
A~\emph{$(\geqslant s)$-subdivision} is a~subdivision where every edge is replaced by a~path with at~least~$s$ internal vertices. 

In this paper, a~coloring of a~graph is implicitly assumed to be a~proper vertex-coloring.

\subsection{Trigraphs, partition sequences, and twin-width}

A~\emph{trigraph} $G$ has vertex set $V(G)$, black edge set $E(G)$, red edge set $R(G)$ such that $E(G) \cap R(G) = \emptyset$ (and $E(G),R(G) \subseteq {V(G) \choose 2}$).
Two vertices $u, v$ such that $uv \in R(G)$ are called \emph{red neighbors}.
The \emph{red degree} of~$u$ is its number of red neighbors.
The \emph{maximum red degree} of~$G$ is the maximum red degree among all its vertices.

Given a~(tri)graph $G$ and a partition $\mathcal P$ of $V(G)$, the \emph{quotient trigraph} $G/\mathcal P$ is the trigraph with vertex set $\mathcal P$, where $PP'$ is a~black edge if these two parts are fully adjacent via black edges (i.e., for every $u \in P$ and every $v \in P'$, $uv \in E(G)$), and a~red edge if there is $u \in P$ and $v \in P'$ such that $uv \in R(G)$ or $u_1, u_2 \in P$ and $v_1, v_2 \in P'$ such that $u_1v_1 \in E(G)$ and $u_2v_2 \notin E(G)$.

A~\emph{partition sequence} of an $n$-vertex (tri)graph~$G$ is a~sequence $\mathcal P_n, \mathcal P_{n-1}, \ldots, \mathcal P_1$ of partitions of $V(G)$ such that $\mathcal P_n = \{\{v\}:v \in V(G)\}$, $\mathcal P_1 = \{V(G)\}$, and for every $i \in [n-1]$, $\mathcal P_i$ is obtained from $\mathcal P_{i+1}$ by merging $P, P' \in \mathcal P_{i+1}$ into $P \cup P'$.
A~\emph{$d$-sequence} of~$G$ is a~partition sequence $\mathcal P_n, \ldots, \mathcal P_1$ such that for every $i \in [n]$, the maximum red degree of~$G/\mathcal P_i$ is at~most~$d$.
The twin-width of a~(tri)graph is the least integer~$d$ such that it admits a~$d$-sequence.

A~partition sequence is called \emph{partial} when we relax the condition that the last partition has a~single part.
We say that a~trigraph is \emph{fully red} if it does not have any black edges.
We will use the simple fact that turning some (or all) black edges red cannot decrease the twin-width of a~trigraph.
Therefore, when showing twin-width upper bounds, we may sometimes assume that all the edges (black and red) are in fact red, if this simplifies a~later argument.

\section{Hardness of Coloring Graphs of Twin-Width 3}\label{sec:col-tww-3}

As a~decision problem, \chr takes as input a~graph $G$ and an integer $k$ and asks whether the chromatic number of $G$, $\chi(G)$, is at~most~$k$.
As a~function problem, one is given the mere graph $G$ and has to output a~(proper) $\chi(G)$-coloring of~$G$.
We show that \chr is \NP-hard on~$\mathcal T_3$, even in its decision form and when a~3-sequence of~$G$ is given as input.
As \chr is polynomial-time solvable on cographs, on trees, and on unit circular-arc graphs (all of which have twin-width at~most~3), and planar graphs are not included in~$\mathcal T_3$, this requires finding a~novel kind of encoding of rich structures (here, \textsc{3-Sat} instances) onto~graphs of twin-width at~most~3.   

\begin{reptheorem}{thm:col-tww-3}
  \chr is \NP-hard and, unless the ETH fails, requires $2^{\Omega(\sqrt{N})}$ time on $N$-vertex graphs of twin-width at~most~3 (even if a~3-sequence is provided). 
\end{reptheorem}

\begin{proof}
  We reduce from \textsc{3-Sat}, which is \NP-complete~\cite{Karp72}, and unless the ETH fails, its $n$-variable $m$-clause instances cannot be solved in time $2^{o(n+m)}$ by the so-called Sparsification Lemma~\cite{sparsification}; it is indeed shown that, under the ETH, $n$-variable \textsc{3-Sat} cannot be solved in time $2^{o(n)}$ even on instances with $O(n)$ clauses. 
  Let $x_1, \ldots, x_n$ be the variables of a~\textsc{3-Sat} instance~$\varphi$, and $c_1, \ldots, c_m$ be its clauses.

  We describe the construction of a~graph $G := G(\varphi)$ with chromatic number (at most) $2n$ if and only if $\varphi$ is satisfiable.
  We start with two paths $P, P'$ each on $2np$ vertices with $p := 2n+m$, which we partition evenly into $p$ subpaths on $2n$ vertices.
  Let $A_1, A_2, \ldots, A_p$ (resp.~$B_1, B_2, \ldots, B_p$) be the vertex sets of the subpaths of~$P$ (resp.~$P'$) from left to right.
  We set $A := \bigcup_{i \in [p]} A_i = V(P)$ and $B := \bigcup_{i \in [p]} B_i = V(P')$.
  We denote by $a_{i,1}, a_{i,2}, \ldots, a_{i,2n}$ the $2n$ vertices of the subpath $P[A_i]$ from left to right.
  Similarly we denote by $b_{i,1}, b_{i,2}, \ldots, b_{i,2n}$ the $2n$ vertices of the subpath $P'[B_i]$ from left to right.
  We now define $G[A]$ (resp.~$G[B]$) as $P^{\leq 2n-1}$ (resp.~$P'^{\leq 2n-1}$), that is, for every $u,v \in A$ (resp. $u,v \in B$), $uv \in E(G)$ whenever $u$ and $v$ are at distance at~most~$2n-1$ in~$P$ (resp.~in~$P'$).
  In particular, each $A_i$ (and each $B_i$) is a~clique of size $2n$, and for all $i,i' \in [p]$ with $i < i'$, it holds that $a_{i,j}$ and $a_{i',j'}$ are adjacent if and only if $i' = i+1$ and $j' < j$.    
  There is no edge between $A$ and $B$.

  For every $i \in [p]$, we add one vertex $v_i$ adjacent to a~subset of~$A_i \cup B_i$, as follows.
  \begin{itemize}
  \item For every $i \in [n]$, $N_G(v_{2i-1}) = (A_{2i-1} \cup B_{2i-1}) \setminus \{a_{2i-1,2i-1}, a_{2i-1,2i}, b_{2i-1,2i-1}\}$ and $N_G(v_{2i})=(A_{2i} \cup B_{2i}) \setminus \{a_{2i,2i-1}, a_{2i,2i}, b_{2i,2i}\}$.
  \item For every $i \in [2n+1,2n+m]$, say $c_{i-2n}$ is the clause $s_1 x_{j_1} \lor s_2 x_{j_2} \lor s_3 x_{j_3}$ (repeat a~literal if it had only two), where $s_1, s_2, s_3$ are signs in $\{\neg,\varepsilon\}$ ($\varepsilon$ is the positive sign).
   Then, $N_G(v_i)=(B_i \setminus \{b_{i, 2j_1}, b_{i, 2j_2}, b_{i, 2j_3}\}) \cup \{a_{i, 2j_1-f(s_1)}, a_{i, 2j_2-f(s_2)}, a_{i, 2j_3-f(s_3)}\}$ with $f(\neg)=0$ and $f(\varepsilon)=1$.
  \end{itemize}
  \medskip
  This finishes the construction of~$G$; see~\cref{fig:col-hardness} for an illustration.
  Note that $G$ has $N:=(4n+1)p=O(n(n+m))$ vertices, and that it can be constructed in polynomial time from~$\varphi$.
  Thus, provided the reduction is correct, which we next prove, a~$2^{o(\sqrt{N})}$-time algorithm for \chr on the produced instances would imply a~$2^{o(n+m)}$-time algorithm for $n$-variable $m$-clause \textsc{3-Sat}, hence refute the ETH.
  
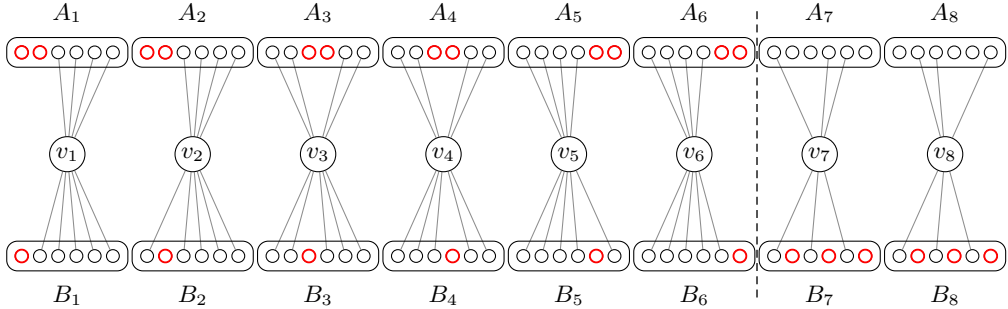
\begin{figure}[!ht]
\centering
\begin{tikzpicture}[
  font=\small,
  >=Stealth,
  vertex/.style={circle,draw,fill=white,inner sep=0.2pt,minimum size=4.8pt},
  missing/.style={circle,draw=red,line width=0.7pt,fill=white,inner sep=0.2pt,minimum size=4.8pt},
  block/.style={rounded corners,draw,inner sep=3pt},
  edge/.style={line width=0.4pt,opacity=0.45},
  vnode/.style={circle,draw,fill=white,inner sep=1.2pt}
]

\def\twon{6}          
\def\pblocks{8}       

\def\dx{1.65cm}       
\def\vx{0.24cm}       
\def\yA{1.35cm}
\def\yB{-1.35cm}

\foreach \i in {1,...,\pblocks}{
  \foreach \j in {1,...,\twon}{
    \node[vertex] (a-\i-\j)
      at ({(\i-1)*\dx + (\j-0.5-\twon/2)*\vx},{\yA}) {};
  }
  \node[block,fit=(a-\i-1) (a-\i-\twon)] (A-\i) {};
  \node[above=2pt of A-\i] {$A_{\i}$};

  \foreach \j in {1,...,\twon}{
    \node[vertex] (b-\i-\j)
      at ({(\i-1)*\dx + (\j-0.5-\twon/2)*\vx},{\yB}) {};
  }
  \node[block,fit=(b-\i-1) (b-\i-\twon)] (B-\i) {};
  \node[below=2pt of B-\i] {$B_{\i}$};

  \node[vnode] (v-\i) at ({(\i-1)*\dx},{0}) {$v_{\i}$};
}

\draw[densely dashed,line width=0.5pt]
  ({5.5*\dx},{\yB-0.55cm}) -- ({5.5*\dx},{\yA+0.55cm});

\foreach \i in {1,...,\twon}{
  \foreach \j in {1,...,\twon}{
    \ifnum\j=\i\relax
    \else
      \draw[edge] (v-\i) -- (b-\i-\j);
    \fi

    \ifodd\i
      \ifnum\j=\i\relax
      \else\ifnum\j=\numexpr\i+1\relax\relax
      \else
        \draw[edge] (v-\i) -- (a-\i-\j);
      \fi\fi
    \else
      \ifnum\j=\numexpr\i-1\relax\relax
      \else\ifnum\j=\i\relax
      \else
        \draw[edge] (v-\i) -- (a-\i-\j);
      \fi\fi
    \fi
  }

  \node[missing] at (b-\i-\i) {};
  \ifodd\i
    \node[missing] at (a-\i-\i) {};
    \node[missing] at (a-\i-\the\numexpr\i+1\relax) {};
  \else
    \node[missing] at (a-\i-\the\numexpr\i-1\relax) {};
    \node[missing] at (a-\i-\i) {};
  \fi
}


\def\ci{7}
\foreach \j in {1,...,\twon}{
  \ifnum\j=2\relax\else
  \ifnum\j=4\relax\else
  \ifnum\j=6\relax\else
    \draw[edge] (v-\ci) -- (b-\ci-\j);
  \fi\fi\fi
}
\node[missing] at (b-\ci-2) {};
\node[missing] at (b-\ci-4) {};
\node[missing] at (b-\ci-6) {};
\draw[edge] (v-\ci) -- (a-\ci-1);
\draw[edge] (v-\ci) -- (a-\ci-4);
\draw[edge] (v-\ci) -- (a-\ci-5);

\def\ci{8}
\foreach \j in {1,...,\twon}{
  \ifnum\j=2\relax\else
  \ifnum\j=4\relax\else
  \ifnum\j=6\relax\else
    \draw[edge] (v-\ci) -- (b-\ci-\j);
  \fi\fi\fi
}
\node[missing] at (b-\ci-2) {};
\node[missing] at (b-\ci-4) {};
\node[missing] at (b-\ci-6) {};
\draw[edge] (v-\ci) -- (a-\ci-2);
\draw[edge] (v-\ci) -- (a-\ci-3);
\draw[edge] (v-\ci) -- (a-\ci-6);

\end{tikzpicture}
\caption{The graph $G(\varphi)$ for $\varphi$ consisting of two clauses $c_1 = x_1 \lor \neg x_2 \lor x_3$ and $c_2 = \neg x_1 \lor x_2 \lor \neg x_3$.
  The variable gadgets and clause gadgets are delimited by the vertical dashed line.
  Not to clutter the picture, we have not drawn the edges in $G[A \cup B]$.}
\label{fig:col-hardness}
\end{figure}

  \medskip

  \textbf{If $\bm{\varphi}$ is satisfiable, then $\bm{G}$ is $\bm{2n}$-colorable.}
  Let $\mathcal A$ be a~satisfying assignment for~$\varphi$.
  We then describe a~(proper) $2n$-coloring $c$ of~$G$.
  \begin{itemize}
    \item For every $i \in [p]$ and every $j \in [2n]$, we set $c(b_{i,j})=j$.
    \item For every $i \in [p]$ and every $j \in [n]$, we set $c(a_{i,2j-1})=2j-1$ and $c(a_{i,2j})=2j$ if $\mathcal A$~assigns $x_j$ to true, or $c(a_{i,2j})=2j-1$ and $c(a_{i,2j-1})=2j$, otherwise.
  \end{itemize}
  \medskip
  As the $2n$ colors repeat with period $2n$ along each path of $\{P,P'\}$, no edge of $G[A \cup B]$ is monochromatic.
  We now simply need to argue that for every vertex $v_i$, there is at least one color $c'$ of $[2n]$ not present in $N_G(v_i)$; and set $c(v_i)=c'$.

  For every $i \in [2n]$, we can set $c(v_i)=i$, as this color does not appear in~$N_G(v_i)$.
  For every $i \in [2n+1,2n+m]$, with $c_{i-2n} = s_1 x_{j_1} \lor s_2 x_{j_2} \lor s_3 x_{j_3}$ and $s_1, s_2, s_3 \in \{\neg,\varepsilon\}$, there is an $h \in \{1,2,3\}$ such that $\mathcal A$ satisfies $s_h x_{j_h}$.
  Then we can set $c(v_i)=2j_h$.
  Indeed, if $\mathcal A$ sets $x_{j_h}$ to true, then $f(s_h)=1$ and $a_{i, 2j_h-1}$ is colored $2j_h-1$, whereas if $\mathcal A$ sets $x_{j_h}$ to false, then $f(s_h)=0$ and $a_{i, 2j_h}$ is colored $2j_h-1$; thus, in both cases, color $2j_h$ is still available for~$v_i$.
  See \cref{fig:col-hardness-colored} for an example.

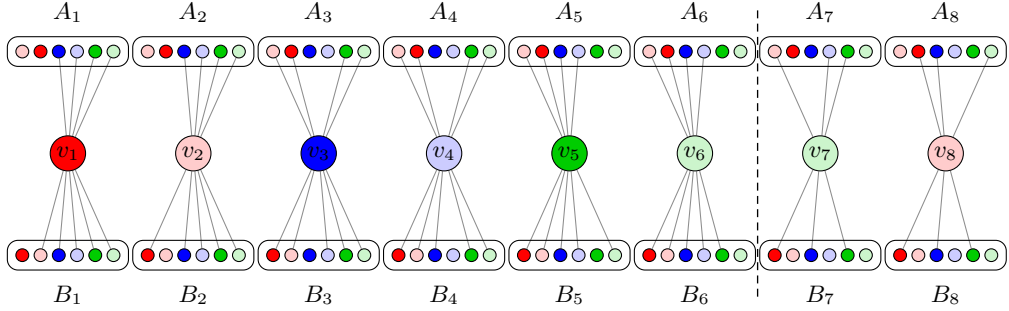
\begin{figure}[!ht]
\centering
\begin{tikzpicture}[
  font=\small,
  >=Stealth,
  vertex/.style={circle,draw,fill=white,inner sep=0.2pt,minimum size=4.8pt},
  missing/.style={circle,draw=red,line width=0.7pt,fill=none,inner sep=0.2pt,minimum size=4.8pt},
  block/.style={rounded corners,draw,inner sep=3pt},
  edge/.style={line width=0.4pt,opacity=0.45},
  vnode/.style={circle,draw,fill=white,inner sep=1.2pt}
]


\def\twon{6}          
\def\pblocks{8}       

\def\dx{1.65cm}       
\def\vx{0.24cm}       
\def\yA{1.35cm}
\def\yB{-1.35cm}

\colorlet{col1}{red}
\colorlet{col2}{red!20}
\colorlet{col3}{blue}
\colorlet{col4}{blue!20}
\colorlet{col5}{green!80!black}
\colorlet{col6}{col5!20}

\foreach \i in {1,...,\pblocks}{
  \foreach \j in {1,...,\twon}{
    \pgfmathtruncatemacro{\acol}{\j}
    \ifnum\j=1\relax\pgfmathtruncatemacro{\acol}{2}\fi
    \ifnum\j=2\relax\pgfmathtruncatemacro{\acol}{1}\fi
    \edef\Afill{col\acol}
    \node[vertex,fill=\Afill] (a-\i-\j)
      at ({(\i-1)*\dx + (\j-0.5-\twon/2)*\vx},{\yA}) {};
  }
  \node[block,fit=(a-\i-1) (a-\i-\twon)] (A-\i) {};
  \node[above=2pt of A-\i] {$A_{\i}$};

  \foreach \j in {1,...,\twon}{
    \pgfmathtruncatemacro{\bcol}{\j}
    \edef\Bfill{col\bcol}
    \node[vertex,fill=\Bfill] (b-\i-\j)
      at ({(\i-1)*\dx + (\j-0.5-\twon/2)*\vx},{\yB}) {};
  }
  \node[block,fit=(b-\i-1) (b-\i-\twon)] (B-\i) {};
  \node[below=2pt of B-\i] {$B_{\i}$};

  \pgfmathtruncatemacro{\vcol}{\i}
  \ifnum\i=7\relax\pgfmathtruncatemacro{\vcol}{6}\fi 
  \ifnum\i=8\relax\pgfmathtruncatemacro{\vcol}{2}\fi 
  \edef\Vfill{col\vcol}
  \node[vnode,fill=\Vfill] (v-\i) at ({(\i-1)*\dx},{0}) {$v_{\i}$};
}

\draw[densely dashed,line width=0.5pt]
  ({5.5*\dx},{\yB-0.55cm}) -- ({5.5*\dx},{\yA+0.55cm});

\foreach \i in {1,...,\twon}{
  \foreach \j in {1,...,\twon}{
    \ifnum\j=\i\relax
    \else
      \draw[edge] (v-\i) -- (b-\i-\j);
    \fi

    \ifodd\i
      \ifnum\j=\i\relax
      \else\ifnum\j=\numexpr\i+1\relax\relax
      \else
        \draw[edge] (v-\i) -- (a-\i-\j);
      \fi\fi
    \else
      \ifnum\j=\numexpr\i-1\relax\relax
      \else\ifnum\j=\i\relax
      \else
        \draw[edge] (v-\i) -- (a-\i-\j);
      \fi\fi
    \fi
  }
}

\def\ci{7}
\foreach \j in {1,...,\twon}{
  \ifnum\j=2\relax\else
  \ifnum\j=4\relax\else
  \ifnum\j=6\relax\else
    \draw[edge] (v-\ci) -- (b-\ci-\j);
  \fi\fi\fi
}
\node at (b-\ci-2) {};
\node at (b-\ci-4) {};
\node at (b-\ci-6) {};
\draw[edge] (v-\ci) -- (a-\ci-1);
\draw[edge] (v-\ci) -- (a-\ci-4);
\draw[edge] (v-\ci) -- (a-\ci-5);

\def\ci{8}
\foreach \j in {1,...,\twon}{
  \ifnum\j=2\relax\else
  \ifnum\j=4\relax\else
  \ifnum\j=6\relax\else
    \draw[edge] (v-\ci) -- (b-\ci-\j);
  \fi\fi\fi
}
\node at (b-\ci-2) {};
\node at (b-\ci-4) {};
\node at (b-\ci-6) {};
\draw[edge] (v-\ci) -- (a-\ci-2);
\draw[edge] (v-\ci) -- (a-\ci-3);
\draw[edge] (v-\ci) -- (a-\ci-6);

\end{tikzpicture}
\caption{A~(proper) $2n$-coloring of the graph $G(\varphi)$ of~\cref{fig:col-hardness} corresponding to the satisfying assignment $\mathcal A = \{x_1 \mapsto \text{false}, x_2 \mapsto \text{true}, x_3 \mapsto \text{true}\}$.
The coloring of $v_7$ (resp.~$v_8$) witnesses that $c_1$ (resp.~$c_2$) is satisfied by literal $x_3$ (resp.~$\neg x_1$).}
\label{fig:col-hardness-colored}
\end{figure}

  \medskip

  \textbf{If $\bm{G}$ is $\bm{2n}$-colorable, then $\bm{\varphi}$ is satisfiable.}
  Fix a~(proper) coloring $c \colon V(G) \to [2n]$ of~$G$, with $c(S) := \{c(v):v \in S\}$ for any $S \subseteq V(G)$.
  As $B_1$ is a~clique, we can further assume that $c(b_{1,j})=j$ for every $j \in [2n]$. 
  By a~straightforward induction, for every $i \in [p-1]$ and $j \in [2n]$, $c(a_{i,j})=c(a_{i+1,j})$ and $c(b_{i,j})=c(b_{i+1,j})=j$, and for every $i \in [p]$, $c(A_i)=[2n]=c(B_i)$.
  Indeed both $A_1$ and $B_1$ are $2n$-vertex cliques, thus $c(A_1)=c(B_1)=[2n]$, and when $2n$-coloring $P^{\leqslant 2n-1}$ (resp.~$P'^{\leqslant 2n-1}$) from left to right, the only available color for $a_{i+1,j}$ (resp.~$b_{i+1,j}$) is $c(a_{i,j})$ (resp.~$c(b_{i,j})$).

  We first show that, for every $i \in [p]$ and $j \in [n]$, $\{c(a_{i,2j-1}),c(a_{i,2j})\}=\{2j-1,2j\}$.
  Assume for contradiction that there is an $h \in \{c(a_{i,2j-1}),c(a_{i,2j})\} \setminus \{2j-1,2j\}$.
  By the previous paragraph, observe that $h$ does not depend on~$i$.
  This implies that $2j-1 \notin \{c(a_{i,2j-1}),c(a_{i,2j})\}$ or $2j \notin \{c(a_{i,2j-1}),c(a_{i,2j})\}$.
  In the former case, $v_{2j-1}$ would need a~$(2n+1)$-st color, whereas in the latter case, the same would happen to~$v_{2j}$.

  Let $\mathcal A$ be the truth assignment that sets, for every $j \in [n]$, $x_j$ to true if $c(a_{i,2j-1})=2j-1$ and $c(a_{i,2j})=2j$, and to false, if $c(a_{i,2j-1})=2j$ and $c(a_{i,2j})=2j-1$ (again, note that this does not depend on~$i$).
  Let us show that $\mathcal A$ satisfies all the clauses of~$\varphi$.
  For each $i \in [2n+1,2n+m]$, say $c_{i-2n} = s_1 x_{j_1} \lor s_2 x_{j_2} \lor s_3 x_{j_3}$ with $s_1, s_2, s_3 \in \{\neg,\varepsilon\}$.
  Since $v_i$ is adjacent to $B_i \setminus \{b_{i,2j_1}, b_{i,2j_2}, b_{i,2j_3}\}$, it holds that $c(v_i) \in \{2j_1, 2j_2, 2j_3\}$; say, $c(v_i) = 2j_h$ with $h \in \{1,2,3\}$.
  This implies that $c(a_{i,2j_h-f(s_h)}) \neq 2j_h$, thus $c(a_{i,2j_h-f(s_h)}) = 2j_h-1$.
  Therefore, $s_h=\varepsilon$ and $\mathcal A$ sets $x_{j_h}$ to true, or $s_h=\neg$ and $\mathcal A$ sets $x_{j_h}$ to false.
  In both cases, the literal $s_h x_{j_h}$ of $c_{i-2n}$ is satisfied by~$\mathcal A$.

  \medskip

  \textbf{$\bm{G}$ admits a~3-sequence.}
  We describe a~partition sequence for $G$ of width at~most~3.
  The sequence has two stages.
  In the first stage, each $A_i$ and each $B_i$ is merged into a~single part.
  Throughout this stage, we denote by $P_i$ (resp.~$P'_i$) the current part containing $a_{i,1}$ (resp.~$b_{i,1}$).
  For $j$ going from 2 to $2n$, for $i$ going from 1 to $p$, merge $P_i$ and singleton $\{a_{i,j}\}$, and merge $P'_i$ and singleton $\{b_{i,j}\}$.

  Let us argue that this partial sequence has width at~most~3.
  At any point, the singleton parts $\{a_{i,j+1}\}, \ldots, \{a_{i,2n}\}$ \emph{in between} $P_i = \{a_{i,1}, \ldots, a_{i,j}\}$ and $P_{i+1}$ are all fully adjacent to $P_i \cup P_{i+1}$, some of these singletons are adjacent to $v_i$, and they have no other adjacencies.
  Thus these parts have red degree 0.
  Note that the singleton parts $\{a_{p,j+1}\}, \ldots, \{a_{p,2n}\}$ do not have incident red edges either.
  Each singleton part $\{v_i\}$ has red degree at~most~2, without any red neighbor outside $\{P_i, P'_i\}$.
  Finally, at any point, each $P_i$ has at most three red neighbors: $P_{i-1}$ (if it exists), $v_i$, and $P_{i+1}$ (if it exists).
  Indeed, all the other parts are either fully nonadjacent to $P_i$ or have red degree~0.
  Symmetrically, $P'_i$ has always at most three red neighbors.

  After the first stage is completed, the resulting trigraph is fully red (we may well assume that $n \geqslant 2$), has $3p$ vertices, and consists of two $p$-vertex paths whose $i$th vertices have an additional shared neighbor of degree~2, for every $i \in [p]$.  
  In the second stage, we finish the 3-sequence, as follows.
  Let us call $Q$ the part containing $v_1$. 
  For $i$ going from 2 to $p$, we merge $P_1$ and $P_i$, and keep denoting the resulting part by $P_1$, we merge $P'_1$ and $P'_i$, and denote the resulting part by $P'_1$, and then we merge $Q$ and $\{v_i\}$.
  Finally, only three parts remain that we merge to a~single part in any way.
  One can observe that at any time of the second stage when at~least~three parts are left, every part $P_i$ or $P'_i$ (including $P_1, P'_1$) has at~most~three red neighbors, and $Q$, as well as every part $\{v_i\}$, has two red neighbors. 
  So this finishes the description of a~3-sequence of~$G(\varphi)$.

  As this sequence can be computed in polynomial time for any formula~$\varphi$, the hardness of \chr on~$\mathcal T_3$ further holds when a~3-sequence is provided as part of the input.
\end{proof}

\section{Hardness of 3-Coloring Graphs of Twin-Width 4}\label{sec:tcol-hardness}

The hardness proof of the previous section requires an unbounded number of colors.
In this section we show that deciding if a~graph of twin-width at~most~4 is 3-colorable is \NP-hard.
We also get a~similar ETH lower bound, and readily extend this hardness result to \textsc{$k$-Coloring}.

\begin{reptheorem}{thm:tcol-hardness}
  \tcol is \NP-hard and, unless the ETH fails, requires $2^{\Omega(n/\log n)}$ time on $n$-vertex graphs of twin-width at~most~4 (even if a~4-sequence is provided).  
\end{reptheorem}

\begin{proof}
  It is known that \tcol is \NP-hard on $n$-vertex graphs of maximum degree at~most~4, and requires $2^{\Omega(n)}$ time, unless the ETH fails~(see for instance \cite[Lemma 1]{CyganFGKMPS17}).
  
  Let $G$ be an $n$-vertex graph of degree at~most~4.
  We build an $O(n \log n)$-vertex graph $G'$ of twin-width at~most~4 such that $G'$ is 3-colorable if and only if $G$ is 3-colorable.
  Let $q := 2 \lceil \log n \rceil$.
  Graph $G'$ is obtained from $G$ by subdividing each edge $e \in E(G)$ $q$ times, thereby creating vertices $v_{e,1}, \ldots, v_{e,q}$, and then adding one true twin (i.e., an adjacent vertex with the same closed neighborhood) $v'_{e,1}, v'_{e,3}, v'_{e,5}, \ldots, v'_{e,q-1}$ to $v_{e,1}, v_{e,3}, v_{e,5}, \ldots, v_{e,q-1}$, respectively (that is, every other created vertex gets a~true twin).
  The ``orientation'' of $e$ to fully specify this process is irrelevant, so we leave it undefined.
  (To get a~unique construction, one can totally order $V(G)$, and always ``orient'' the edge from its smaller endpoint to its larger endpoint.)

  The way $G'$ is built, it can be observed that there is a~partial 0-sequence from $G'$ to the $q$-subdivision of~$G$.
  Since every $(\geqslant 2 \log n)$-subdivision of an $n$-vertex graph has twin-width at~most~4~\cite{BergeBonnetDepres}, $G'$ has twin-width at~most~4.
  The number of vertices of $G'$ is $O(n+|E(G)| \log n)=O(n \log n)$, and its construction can be done in $O(n \log n)$ time.

  Given an original vertex $u$ (i.e., $u \in V(G)$) of~$G'$ adjacent to some $v_{e,1}$, and a~color $i \in [3]$, the \emph{3-coloring propagation} of $u$ along~$e$ is the only (proper) 3-coloring of \[G[\{u,v_{e,1},v'_{e,1},v_{e,2},v_{e,3},v'_{e,3}, \ldots, v_{e,q-1},v'_{e,q-1},v_{e,q}\}]\] that gives color $i$ to~$u$.
  Note that this 3-coloring also gives color $i$ to $v_{e,q}$.
  
  If $G$ is 3-colorable, we start by copying a~3-coloring of~$G$ at the original vertices of~$G'$.
  We then make the 3-coloring propagation for each pair $u \in V(G), v_{e,1}$ with $uv_{e,1} \in E(G')$.
  By the previous remark, this defines a~(proper) 3-coloring of~$G'$.
  Conversely, the projection of any 3-coloring of~$G'$ onto $V(G)$ properly colors~$G$.
\end{proof}

\begin{corollary}\label{cor:tcol-hardness}
  For every $k \geqslant 3$, \kcol is \NP-hard and, unless the ETH fails, requires $2^{\Omega(n/\log n)}$ time on $n$-vertex graphs of twin-width at~most~4.  
\end{corollary}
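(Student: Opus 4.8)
The plan is to reduce \tcol on graphs of twin-width at~most~$4$ to \kcol on graphs of twin-width at~most~$4$, which together with \cref{thm:tcol-hardness} immediately gives the statement. The case $k = 3$ is \cref{thm:tcol-hardness} itself, so assume $k \geqslant 4$. Starting from a graph $G$ produced by the reduction of \cref{thm:tcol-hardness} (so $\tww(G) \leqslant 4$, $G$ has $N = O(n^2)$ vertices for an $n$-variable instance, a $4$-sequence of $G$ is computable in polynomial time, and deciding $3$-colorability of $G$ in $2^{o(\sqrt N)}$ time refutes the ETH), I would form $G'$ by adding $k-3$ new vertices $y_1,\dots,y_{k-3}$ that are pairwise adjacent and each adjacent to every vertex of $V(G)$. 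The graph $G'$ has $N + k - 3$ vertices and is computable from $G$ in polynomial time.

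The correctness is the routine ``attach an apex clique to force $3$ colors'' argument. A proper $3$-coloring of $G$ with colors in $\{1,2,3\}$ extends to a proper $k$-coloring of $G'$ by setting $c(y_t) := 3+t$; conversely, in any proper $k$-coloring of $G'$ the vertices $y_1,\dots,y_{k-3}$ receive $k-3$ pairwise distinct colors, none of which occurs on $V(G)$ (each $y_t$ is complete to $V(G)$), so $V(G)$ is colored with at~most~$3$ colors. Hence $G'$ is $k$-colorable if and only if $G$ is $3$-colorable. Since $k$ is fixed, $|V(G')| = N + O(1)$ and thus $\sqrt{|V(G')|} = \Theta(\sqrt N)$, so a $2^{o(\sqrt{|V(G')|})}$-time algorithm for \kcol on $G'$ would yield a $2^{o(\sqrt N)}$-time algorithm for \tcol on $G$, refuting the ETH.

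The only step I expect to need a genuine argument is that $\tww(G') \leqslant 4$, and here the point is that $M := \{y_1,\dots,y_{k-3}\}$ is a module of $G'$ that induces a clique and is complete to $V(G)$, which makes the bound easy. I would first contract $M$ to a single vertex, one merge at a time: throughout this phase the current partition consists of the singletons of $V(G)$ together with a partition of $M$ into parts that each induce a clique complete to $V(G)$, so every quotient trigraph is entirely black and has maximum red degree~$0$. This phase leaves $G$ together with one extra vertex $y$ universal to $V(G)$. I then replay a $4$-sequence of $G$ while keeping $\{y\}$ as an untouched part: because $y$ is adjacent to every vertex of $V(G)$, the part $\{y\}$ stays a black neighbor of every current part of $V(G)$, so it has red degree~$0$ and contributes no red edge, and the maximum red degree at each step equals that of the corresponding step for $G$, hence is at~most~$4$; when $V(G)$ has been merged into a single part, I merge $\{y\}$ into it. Concatenating these phases is a $4$-sequence of $G'$, computable in polynomial time, so the hardness persists even when a $4$-sequence is supplied in the input, and the corollary follows.
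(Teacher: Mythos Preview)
Your proposal is correct and follows essentially the same approach as the paper, which also adds $k-3$ universal vertices to $G(\varphi)$ and observes that this preserves the twin-width and makes the graph $k$-colorable iff $G(\varphi)$ is $3$-colorable. Your write-up is in fact more detailed than the paper's one-line proof, spelling out explicitly why the universal clique can be contracted first without red edges and why replaying the $4$-sequence with the remaining universal vertex as an untouched black-only part keeps the width at~most~$4$.
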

\begin{proof}
  Adding $k-3$ universal vertices to the graph $G'$ built in the proof of~\cref{thm:tcol-hardness} does not change its twin-width and makes the resulting graph $k$-colorable if and only if $G'$ is 3-colorable.
\end{proof}

\end{document}